\documentclass[sigconf]{acmart}

\usepackage{graphicx}
\usepackage{algorithm}
\usepackage{algpseudocode}
\usepackage{booktabs}
\usepackage{multirow}
\usepackage{graphicx}
\usepackage{array}

\usepackage{makecell} 
\usepackage{booktabs}
\usepackage{amsmath}
\usepackage[table]{xcolor}
\usepackage[dvipsnames]{xcolor}
\definecolor{mygray}{gray}{0.95}
\definecolor{headergray}{HTML}{C4C8D4}
\usepackage{subcaption}
\usepackage{pifont}
\usepackage{stfloats}

\usepackage{tcolorbox}
\tcbuselibrary{breakable, skins, listings}

\definecolor{Periwinkle}{HTML}{A9A9E6}
\definecolor{mygrey}{gray}{0.6}

\AtBeginDocument{%
  }

\setcopyright{cc}
\setcctype{by}
\copyrightyear{2026}
\acmYear{2026}
\acmDOI{10.1145/3770855.3817858}
\acmConference[KDD 2026] {Proceedings of the 32nd ACM SIGKDD Conference on Knowledge Discovery and Data Mining V.2}{August 9--13, 2026}{Jeju Island, Republic of Korea.}
\acmBooktitle{Proceedings of the 32nd ACM SIGKDD Conference on Knowledge Discovery and Data Mining V.2 (KDD 2026), August 9--13, 2026, Jeju Island, Republic of Korea}

\acmISBN{979-8-4007-2259-2/2026/08}

\begin{document}

\title{SelPE: Progressive Selection for Private Structured Text Synthesis}

\author{Xuancheng Zhu}
\affiliation{%
  \institution{Beijing University of Posts and Telecommunications}
  \city{Beijing}
  \country{China}
}
\email{zhuxuanch@bupt.edu.cn}

\author{Guoshun Nan}
\authornote{Guoshun Nan and Min Lei are corresponding authors.}
\affiliation{%
  \institution{Beijing University of Posts and Telecommunications}
  \city{Beijing}
  \country{China}
}
\email{nanguo2021@bupt.edu.cn}

\author{Han Zhang}
\affiliation{%
  \institution{Beijing University of Posts and Telecommunications}
  \city{Beijing}
  \country{China}
}
\email{zhangh@bupt.edu.cn}

\author{Ben Niu}
\affiliation{%
  \institution{Beijing University of Posts and Telecommunications}
  \city{Beijing}
  \country{China}
}
\email{byrne625@bupt.edu.cn}

\author{Yang Yue}
\affiliation{%
  \institution{Beijing University of Posts and Telecommunications}
  \city{Beijing}
  \country{China}
}
\email{yueyang2024@bupt.edu.cn}

\author{Zixu Wang}
\affiliation{%
  \institution{Beijing University of Posts and Telecommunications}
  \city{Beijing}
  \country{China}
}
\email{2022213257@bupt.cn}

\author{Yilian Liu}
\affiliation{%
  \institution{Beijing University of Posts and Telecommunications}
  \city{Beijing}
  \country{China}
}
\email{liuyilian@bupt.edu.cn}

\author{Min Lei}
\authornotemark[1]
\affiliation{%
  \institution{Beijing University of Posts and Telecommunications}
  \city{Beijing}
  \country{China}
}
\email{leimin@bupt.edu.cn}

\author{Xiaofeng Tao}
\affiliation{%
  \institution{Beijing University of Posts and Telecommunications}
  \city{Beijing}
  \country{China}
}
\email{taoxf@bupt.edu.cn}

\renewcommand{\shortauthors}{Xuancheng Zhu et al.}

\newcommand{\Topk}{\operatorname{Top}\text{-}k}

\begin{abstract}
Many data-driven applications rely on structured textual records, such as clinical triage notes and financial transaction logs, for downstream learning and decision-making.
In privacy-sensitive domains, access to such records is strictly regulated, often resulting in only a small number of available private examples for model development and analysis.
Yet existing differential privacy data synthesis methods fall short: tabular techniques cannot faithfully model free-form text, while text-based approaches often break structural constraints.
We propose \textbf{SelPE}, a selection-guided progressive evolution framework for small-sample private structured text synthesis.
Rather than relying on noisy aggregation or private model training, SelPE concentrates privacy budget on a sequence of multi-batch top-$1$ selections, enabling efficient guidance under tight privacy constraints.
To support faithful and valid synthesis, SelPE decouples semantic abstraction from schema realization via a two-stage generation pipeline, and evaluates candidates using a multi-channel distance kernel that jointly models textual, categorical, and numeric fields in their native representations.
A non-private contrastive expansion mechanism further promotes diversity without incurring additional privacy cost.
Extensive Experiments demonstrate that SelPE consistently improves structural validity, fidelity, and downstream utility under strict differential privacy budgets, particularly in low-data regimes. 
Our code is available at~\href{https://github.com/ZhuXuanCH/SelPE}{\textcolor{NavyBlue}{this link}}.

\end{abstract}




\begin{CCSXML}
<ccs2012>
  <concept>
    <concept_id>10002951.10003317</concept_id>
    <concept_desc>Security and privacy~Differential privacy</concept_desc>
    <concept_significance>500</concept_significance>
  </concept>
</ccs2012>
<ccs2012>
<concept>
<concept_id>10002978.10003029.10011150</concept_id>
<concept_desc>Security and privacy~Privacy protections</concept_desc>
<concept_significance>500</concept_significance>
</concept>
</ccs2012>
\end{CCSXML}

\ccsdesc[500]{Security and privacy~Differential privacy}
\ccsdesc[500]{Security and privacy~Privacy protections}

\keywords{Structured Text Data, Data Synthesis, Differentially Private}


\maketitle

\section{Introduction}

\begin{figure}[t]
  \centering
  \includegraphics[
    width=\linewidth,
    height=0.40\textheight,
    keepaspectratio
  ]{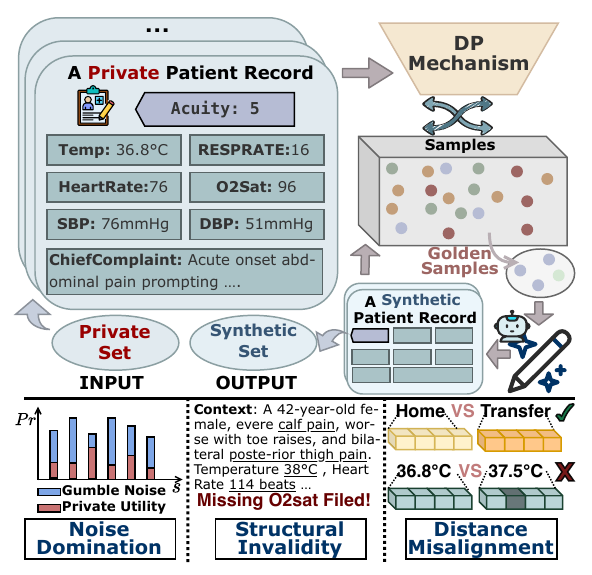}
  \caption{Overview of our differentially private synthesis pipeline. The figure illustrates the end-to-end DP synthesis pipeline and highlights three key challenges. }


  \label{fig:intro}
  \vspace{-15pt}
\end{figure}

Data has become a central asset in modern data-driven systems~\citep{souza2025breaking, fu2025ins, zhou2025anyprefer}, underpinning model training~\citep{gan2025towards, li2025forewarned, du2025graphmaster}, evaluation~\citep{ScIRGen, TimeGraph, SciHorizon}, and decision~\citep{yao2025a, xie2025xdrive, piterbarg2025training}.
In many real-world settings, data is inherently structured, consisting of multiple interdependent fields that combine numeric values, categorical attributes, and free-form text under strict schemas~\citep{xu2019modeling}.
Such structured textual data is ubiquitous and can be viewed equivalently as tables with text or structured text with rich semantics~\citep{yin-etal-2020-tabert, herzig-etal-2020-tapas}.
However, regulations and compliance requirements often restrict access to such sensitive datasets, leaving practitioners with only a small number of private examples.
This tension highlights the importance of studying small-sample structured text data synthesis, where useful and realistic data must be synthesized under tight differentially privacy (DP) constraints~\citep{pierquin2025privacy, zou2025contrastive, gao2025dataadaptive, tan2025synthesizing}.


Existing research on private data synthesis~\citep{hu2024sok} largely follows two directions: tabular data synthesis~\citep{Cormode2025tabularsurvey} and text data synthesis.
For tabular data~\citep{yang2024tabulardpsurvey}, prior methods perturb low-dimensional marginals or graphical models, or train deep generative models such as VAEs~\citep{chen2018differentially, acs2018differentially,takagi2021p3gm} and diffusion models; however, they mainly target numeric or categorical fields and do not naturally handle free-form text.
For textual data, DP-SGD~\citep{abadi2016deep} and Private Evolution (PE)~\citep{xie2024augpe} enable private generation without releasing raw data, but are primarily designed for unstructured text~\citep{gonzalez2025private}, leaving schemas and cross-field dependencies under-modeled.
Recent work~\citep{wangstruct} shows that for structured text with mixed field types and explicit schemas, existing DP synthesis methods still suffer from unstable utility estimation, semantic drift, and schema violations.
These limitations expose the lack of DP synthesis mechanisms that can effectively exploit structural information when private data are scarce.

As shown in Figure~\ref{fig:intro}, we study private structured-text synthesis under small sample sizes, which faces three key challenges.
\emph{(i) Signal sparsity.}
With limited private data, additive-noise mechanisms~\citep{geng2015optimal} and multi-output selection~\citep{qiao2021oneshot} dilute utility signals~\citep{zhangpcevolve}, allowing noise to dominate private guidance and obscure candidate distinctions.
\emph{(ii) Schema fidelity.}
Generated records must satisfy complex schema constraints~\citep{geng2025jsonschemabench} across heterogeneous fields, yet hard constraints may restrict model reasoning~\citep{banerjeecrane,gonzalez2025constrained}, whereas free-form generation often violates schemas~\citep{yang2025structeval,raspanti2025grammar}.
\emph{(iii) Cross-type alignment.}
Candidate evaluation must jointly compare free-form text, categorical attributes, and numeric fields with incompatible spaces and scales~\cite{gorishniy2021revisiting,borisov2022deep}; single-space embeddings or unified numeric metrics collapse heterogeneous signals and yield misaligned comparisons~\citep{zhu2018heterogeneous}.

A central observation is that, in the small-sample regime, noise-based aggregation mechanisms become dominated by perturbation, as the noise magnitude grows with the candidate pool size while the available signal is constrained by the limited private data (Sec.~\ref{sec:3_4_limitationsnoise}).
Moreover, text differs fundamentally from images~\citep{zhangpcevolve} or numeric data~\citep{wang2024harmonic}: textual records are diverse and compositional, and cannot be meaningfully summarized by a single average.
As a result, privacy budget is better concentrated on a small number of \emph{high-confidence decisions}.

Building on this, we propose \textbf{SelPE} (\emph{Selection-guided Progressive Evolution}), a framework for small-sample private structured text synthesis.
SelPE allocates privacy budgets to repeated \emph{multi-batch top-$1$ selections} rather than noisy aggregation.
Each private selection operates over a carefully designed mixed-type distance kernel and actively conditions future candidate generation, forming a compact evolution trajectory with high downstream impact.
First, we introduce a \emph{context–schema decoupled generation} pipeline that separates free-form semantic abstraction from schema-grounded realization, enabling diversiform semantic modeling while strictly enforcing structural validity.
Second, we propose a \emph{multi-channel distance kernel} that enables accurate comparison between structured text records by jointly modeling global semantics, field-level text, categorical attributes, and numeric values in their native representations.
Third, we develop a \emph{selection-guided progressive evolution strategy} that stabilizes selection under small samples, promotes diversity through non-private contrastive expansion, and concentrates privacy budget on the most informative decisions.

This work makes three main contributions:
\textbf{(i) Schema-aware generation and evaluation.}
We propose a cohesive framework that decouples semantic abstraction from schema-grounded realization, and supports mixed-type comparison through a multi-channel distance kernel, jointly ensuring structural validity and cross-field consistency in structured text. 
\textbf{ii) Selection-driven private evolution for structured text.}
We introduce a selection-centric PE framework tailored to small-sample structured text synthesis, replacing noise-based aggregation with multi-batch top-$1$ DP selection.
\textbf{iii) Empirical evaluation.}
We conduct extensive experiments on 3 realistic structured text datasets, comparing against 5 text data synthesis baselines under 4 privacy budgets.
Results demonstrate consistent improvements in structural fidelity and downstream utility.
We further analyze robustness across varying data scales, hyperparameter settings, and ablation variants. 
\section{Related Work}
\textbf{Private Tabular Data Synthesis.}
Prior work on DP tabular data synthesis can be broadly categorized into two directions ~\citep{Cormode2025tabularsurvey,yang2024tabulardpsurvey}.
\textbf{i)} \emph{Statistics-based approaches} privately estimate low-dimensional marginals~\citep{li2014differentially, asghar2020differentially} or graphical-model parameters~\citep{zhang2017privbayes, ma2023improved} and reconstruct synthetic datasets consistent with these statistics, offering strong privacy guarantees and competitive utility on structured tabular benchmarks.
\textbf{ii)} \emph{Deep learning approaches}~\citep{zhang2021privsyn} employ deep generative models, such as GAN, Diffusion~\citep{xie2018differentially, frigerio2019differentially} or VAE~\citep{chen2018differentially, acs2018differentially,takagi2021p3gm} trained with Differentially Private Stochastic Gradient Descent (DP-SGD)~\citep{abadi2016deep} or teacher-student frameworks, providing greater modeling flexibility but often exhibiting sensitivity to privacy budgets and training stability~\citep{frigerio2019differentially}.
Overall, existing tabular data stnthesis methods primarily target purely numeric or categorical schemas and do not address synthesis for datasets with mixed numerical and free-form textual attributes.

\textbf{Private Text Data Synthesis.}
Private text data synthesis is generally more challenging than tabular data synthesis due to the high dimensionality of language and its semantic sensitivity~\citep{hu2024sok}.
Existing work can be broadly grouped into two categories.
\textbf{i)} DP training approaches apply mechanisms such as DP-SGD ~\citep{abadi2016deep} to fine-tune language models for downstream tasks~\citep{liu2019roberta} or text generation~\citep{lilarge,yudifferentially}, providing end-to-end privacy guarantees but often exhibiting sensitivity to privacy budgets and training stability~\citep{yue2023synthetic}.
\textbf{ii)} Private Evolution (PE) methods~\citep{lin2024pe, xie2024augpe} leverage large pre-trained language models without parameter updates, iteratively selecting and refining synthetic samples via differentially private mechanisms~\citep{zoucontrastiveSynthesizing}, thereby avoiding DP training and enabling the use of strong foundation models~\citep{gonzalez2025private}.
Recent evaluations~\citep{wang2025structbench} suggest that existing private text synthesis methods primarily emphasize semantic similarity and sample-level diversity, while providing limited support for preserving structural constraints.
As a result, synthesizing structured text that follows implicit schemas~\citep{geng2025jsonschemabench,lu2025learning} or compositional rules~\citep{kairouz2015DPComposition,whitehouse2023fully} under differential privacy remains an open challenge. 
\section{Preliminaries}

\subsection{Problem Definition.}
\label{sec:3_1Definition}
\textbf{Structured Text Data.}
Building on prior formulations \citep{wangstruct}, we study \emph{structured text data} composed of multiple labeled fields with heterogeneous value types.
A schema $\mathcal{S}$ is a specification that defines a set of field labels together with field-level rules. A structured text record $s = \{(c_j, v_j)\}_{j=1}^d,$ where each value $v_j$ satisfies the constraints specified by the schema $\mathcal{S}$ and may be numeric, categorical, or free-form textual.
We denote by $\mathbb{S}$ the space of all records that conform to schema $\mathcal{S}$. When a dataset consists of $n$ such records $D = \{s_i\}_{i=1}^n \subseteq \mathbb{S}$, they can be naturally organized in a tabular form. \emph{We focus on structured text data with free-form textual fields}, as opposed to datasets consisting solely of numeric fields.

\textbf{Private Data Synthesis.}
Let $P = \{s_i\}_{i=1}^n \subset \mathbb{S} $ be a private dataset of structured text records.
The goal of differentially private data synthesis\citep{hu2024sok} is to construct a synthetic dataset
$ S = \{\tilde{s}_j\}_{j=1}^m \subset \mathbb{S} $
such that $S$ preserves fidlity, utility and privacy properties. Moreover, the algorithm is required to satisfy $(\varepsilon,\delta)$-DP with respect to the private dataset $P$.

\textbf{Access model.}
Unlike DP Training method\citep{abadi2016deep}, we consider synthesis in a privacy-preserving access regime: \emph{the algorithm does not directly reveal or train on private records, and any interaction with $P$ is mediated through DP mechanisms}. Accordingly, privacy loss is fully characterized by the sequence and composition of DP mechanisms applied to $P$.

\subsection{Differential Privacy}
\label{sec:DPtheorem}
Differential Privacy~\citep{dwork2006calibrating} provides a worst-case, distribution-independent privacy guarantee against a broad class of inference attacks~\citep{shokri2017membership,yeom2018privacy}.
Two datasets $D,D' \in \mathcal{S}^N$ are \emph{neighbors}, denoted $D \sim D'$, if they differ in exactly one record.
A randomized mechanism $\mathcal{M}$ satisfies $(\varepsilon,\delta)$-DP if for all $D \sim D'$ and all measurable events $\mathcal{E}$,
\begin{equation}
\Pr[\mathcal{M}(D)\in \mathcal{E}]
\le e^{\varepsilon}\Pr[\mathcal{M}(D')\in \mathcal{E}] + \delta .
\end{equation}

\textbf{Sequential Composition.}
DP supports adaptive sequential composition~\citep{dwork2006our}.
If a dataset is accessed by a sequence of $T$ mechanisms $\{\mathcal{M}_t\}_{t=1}^T$, each satisfying $(\varepsilon,\delta)$-DP, then their joint output satisfies $(\varepsilon_{\mathrm{tot}},\delta_{\mathrm{tot}})$-DP, where $\varepsilon_{\mathrm{tot}}$ and $\delta_{\mathrm{tot}}$ follow standard advanced composition bounds~\citep{kairouz2015DPComposition}.
Moreover, by the post-processing property~\citep{dwork2014algorithmic}, any data-independent transformation of DP outputs incurs no additional privacy loss.

\textbf{Parallel Composition.}
When DP mechanisms operate on disjoint subsets of the data~\citep{dwork2006calibrating}, 
if $D=\biguplus_{i=1}^K D_i$ and each mechanism $\mathcal{M}_i$ accesses only $D_i$ and satisfies $(\varepsilon,\delta)$-DP, then the combined mechanism
$
\mathcal{M}(D) = (\mathcal{M}_1(D_1),\dots,\mathcal{M}_K(D_K))
$
also satisfies $(\varepsilon,\delta)$-DP.
In this case, privacy loss is governed by the maximum loss of any component, rather than accumulating across mechanisms.

\textbf{Gaussian Mechanism.}
Let $f$ be a $D$-dimensional query with $\ell_2$-sensitivity
$\Delta_f = \max_{D \sim D'} \| f(D) - f(D') \|_2$.
The Gaussian mechanism releases $\mathcal{M}_{\mathrm{G}}(D) = f(D) + \mathcal{N}(0,\sigma^2 I_D)$.
For $\varepsilon \in (0,1)$ and $\delta \in (0,1)$, choosing
\begin{equation}
\label{eq:gaussian_sigma}
\sigma = \Delta_f \cdot \frac{\sqrt{2 \log(1.25/\delta)}}{\varepsilon}
\end{equation}
ensures that $\mathcal{M}_{\mathrm{G}}$ satisfies $(\varepsilon,\delta)$-differential privacy.

\textbf{Top-$1$ Exponential Mechanism.}
Let $\mathcal{R}$ be an output range and let $u:\mathcal{S}^N \times \mathcal{R} \rightarrow \mathbb{R}$ be a utility function with sensitivity
$\Delta_u = \max_{r \in \mathcal{R}} \max_{D \sim D'} |u(D,r) - u(D',r)|$.
The exponential mechanism \citep{dong2020optimal,mcsherry2007mechanism} samples an output $r \in \mathcal{R}$ according to
\begin{equation}
\label{eq:exp_mech_top1}
\Pr\!\left[\mathcal{M}_{\mathrm{E}}(D) = r\right]
=
\frac{\exp\!\left(\varepsilon\, u(D,r)/\Delta_u\right)}
{\sum_{r' \in \mathcal{R}} \exp\!\left(\varepsilon\, u(D,r')/\Delta_u\right)} ,
\end{equation}
which satisfies $\varepsilon$-differential privacy.

\subsection{Limitations of Noise-Based Mechanisms}
\label{sec:3_4_limitationsnoise}

Prior work has observed that original PE algorithm becomes ineffective in small-sample regimes, particularly for image synthesis~\citep{lin2024pe, zhangpcevolve}.
We formalize this limitation and show that noise-based DP mechanisms suffer from vanishing signal-to-noise ratio when the private sample size is small.

\begin{lemma}[Noise domination in small-sample PE]
\label{lem:small_n_limit}
Consider a private dataset of size $n$ and a candidate pool of size $M$.
For histogram-based PE mechanisms using additive Gaussian noise, the effective signal-to-noise ratio satisfies
\begin{equation}
\mathrm{SNR} = O\!\left(\frac{n}{\sigma M^{3/2}}\right).
\end{equation}
In the small-sample regime where $n \ll M$, the released utility signal is dominated by noise, leading to unreliable candidate selection.
\end{lemma}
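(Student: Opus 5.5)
The plan is to formalize the PE histogram mechanism and bound the signal and noise magnitudes separately. In histogram-based PE each private record $s_i\in P$ votes for its nearest candidate among the pool $\{\tilde s_1,\dots,\tilde s_M\}$, giving a vote vector $h\in\mathbb{R}^M$ with $\sum_k h_k=n$. The mechanism releases $\tilde h=h+\mathcal{N}(0,\sigma^2 I_M)$, with $\sigma$ calibrated by Eq.~\eqref{eq:gaussian_sigma}. Replacing one record moves one vote, so the $\ell_2$-sensitivity is $\Delta_f\le\sqrt2$, which is $O(1)$ and independent of $M$. The SNR will be defined as the ratio between the $\ell_2$ norm of the centred signal and the expected $\ell_2$ norm of the noise, or equivalently per-coordinate averages. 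I will state explicitly that this is the quantity whose order is claimed.

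\textbf{Step 1 (signal).} Let $\bar h=n/M$ and consider the informative part $h-\bar h\mathbf 1$, i.e.\ how far the votes depart from uniform. To get an upper bound of the form claimed, I will use the per-candidate average: the average count per candidate is $n/M$, and $\|h\|_2\le n$ in the worst case.

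\textbf{Step 2 (noise).} The noise vector has $\mathbb{E}\|\xi\|_2\approx\sigma\sqrt M$, i.e.\ it grows with $\sqrt M$.

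\textbf{Step 3 (combine).} Combining the per-candidate signal $n/M$ with total noise $\sigma\sqrt M$ gives
\[
\mathrm{SNR}\;\lesssim\;\frac{n/M}{\sigma\sqrt M}=\frac{n}{\sigma M^{3/2}} .
\]
This is one natural normalization, comparing the per-coordinate average signal with the aggregate noise magnitude. An alternative route counts noise on each of the $M$ coordinates in a top-selection comparison, which must beat the maximum of $M$ Gaussians. That route gives $\sigma\sqrt{2\log M}$ rather than $\sqrt M$, so I will commit to the norm-based normalization.

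\textbf{Consequence for the small-sample regime.} When $n\ll M$ the SNR tends to $0$ even for constant $\sigma$. Since $\sigma=\Omega(1/\varepsilon)$ and $\varepsilon$ is small, the released $\tilde h$ is then statistically indistinguishable from pure noise. To turn this into "unreliable selection", I will give a short argument. Any two candidates differ in true counts by at most $n$. With $n\ll\sigma\sqrt M$, the probability that $\arg\max\tilde h$ equals $\arg\max h$ is close to that under the pure-noise distribution, which is about $1/M$ when roughly $M$ candidates are competitive. This follows from standard Gaussian anti-concentration and order-statistic bounds.

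\textbf{Main obstacle.} The hard part is making the SNR definition precise so that the $M^{3/2}$ exponent arises honestly rather than by a convenient choice of normalization. I would first fix the definition as (average signal per candidate)/(expected $\ell_2$ noise norm) and state clearly that the lemma is an order-of-magnitude bound under that definition. I would add a remark that any reasonable alternative normalization still yields SNR $\to 0$ as $M/n\to\infty$, so the qualitative conclusion is robust.
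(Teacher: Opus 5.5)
Your Steps 1--3 reproduce the paper's proof sketch almost exactly. The paper also bounds the histogram mass by $O(n)$, takes a per-candidate signal of $O(n/M)$, uses $\mathbb{E}\|Z\|_2\approx\sigma\sqrt M$, and divides. It also has the normalization issue you flag. The paper nominally defines the proxy as $\|f(P)\|_2/\mathbb{E}\|Z\|_2$, and since $n/\sqrt M\le\|h\|_2\le n$, that ratio (and your first stated definition) lies between $n/(\sigma M)$ and $n/(\sigma\sqrt M)$. The factor $M^{3/2}$ appears only when the per-coordinate signal $n/M$ is divided by the aggregate noise norm. Committing explicitly to that mixed definition, as you do, is the honest version of the paper's argument, so for the SNR formula itself you match the paper's level of rigor. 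One small slip: in Step 1 you call $h-\bar h\mathbf 1$ the informative part and then use $\bar h=n/M$, which is exactly the part centring removes.

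The step that would fail is your ``consequence'' paragraph, which claims more than the paper (the paper only asserts that a vanishing proxy means noise domination). The inference from $n\ll\sigma\sqrt M$ to ``$\tilde h$ is indistinguishable from pure noise'' and ``$\arg\max\tilde h$ is nearly uniform'' is false. Isotropic noise of total norm $\sigma\sqrt M$ does not mask a signal whose norm is large compared with $\sigma$. The total-variation distance between $\mathcal N(h,\sigma^2 I_M)$ and $\mathcal N(\bar h\mathbf 1,\sigma^2 I_M)$ depends only on $\|h-\bar h\mathbf 1\|_2/\sigma$. An argmax competes against $\max_j\sigma Z_j\approx\sigma\sqrt{2\log M}$, the scale you yourself noted in Step 3. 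Concretely, take $\sigma$ fixed, $n=M^{1/4}$ (so $n\ll M$ and $\sigma\sqrt{2\log M}\ll n\ll\sigma\sqrt M$), and let all $n$ records vote for one candidate. Then that candidate is selected with probability tending to $1$. The correct thresholds are these: selection probability near $1/M$ needs the relevant count gaps to be $\ll\sigma/\sqrt{\log M}$, and selection probability merely bounded away from $1$ needs gaps at most of order $\sigma\sqrt{\log M}$. So either stop where the paper stops and treat the conclusion as heuristic, or add an assumption that keeps the gaps small and argue with the $\sqrt{\log M}$ threshold rather than $\sqrt M$. For example, with $n\ll M$ and votes spread out, the top counts differ by $O(1)$, so selection becomes unreliable once $\sigma\sqrt{\log M}\gg 1$.
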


\textbf{Proof sketch.}
Noise-based PE evaluates candidates by releasing a differentially private histogram over $M$ bins. While the perturbation magnitude scales as $\Theta(\sigma \sqrt{M})$, the total signal contributed by $n$ private records is at most $O(n)$, yielding a per-candidate signal of order $O(n/M)$.
As a result, the signal-to-noise ratio decays as $O(n/(\sigma M^{3/2}))$. A full derivation is provided in Appendix~\ref{app:noise_limit}. An analogous limitation applies to multi-output selection mechanisms based on Gumbel perturbations (top-$k$), see Appendix~\ref{app:topk}.

\section{Methodology}
\label{sec:method}

\begin{figure*}[t]
    \centering
    \setlength{\abovecaptionskip}{2pt}
    \includegraphics[width=\textwidth, trim=0 8pt 0 0, clip]{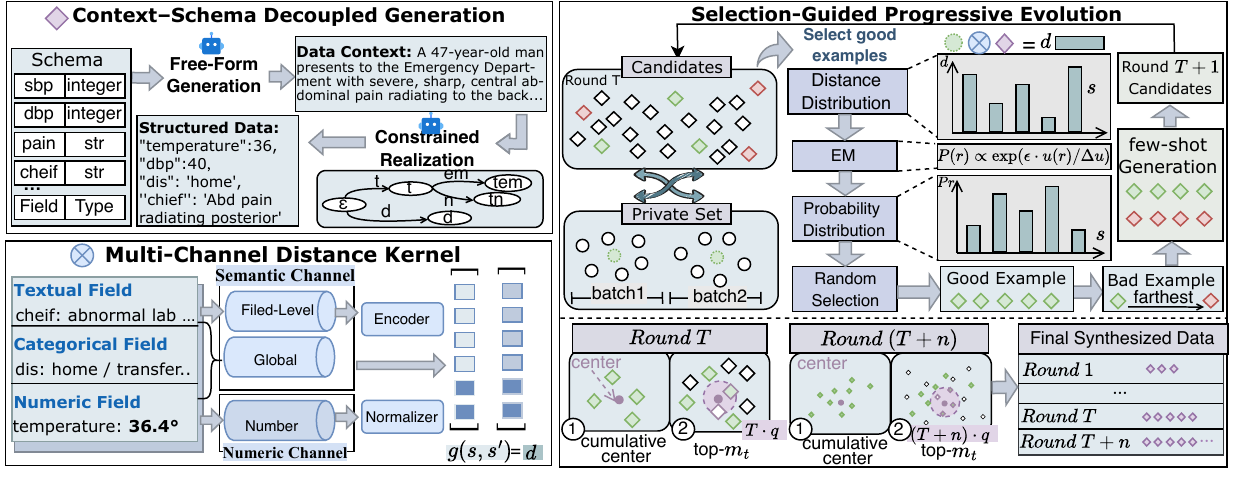}
    \caption{
    Overview of \textsc{SelPE}.
    SelPE decouples semantic abstraction from schema realization to ensure structural validity. SelPE employs a multi-channel distance kernel to jointly evaluate textual and numeric fields, and concentrates the privacy budget on a sequence of progressive selections to enable high-fidelity synthesis under tight DP constraints.
    }
    \label{fig:main}
    \vspace{-10pt}
\end{figure*}

As shown in Figure~\ref{fig:main}, SelPE is a selection-guided framework for small-sample structured text synthesis. 
Motivated by the noise domination of aggregation-based mechanisms in low-data regimes (Lemma~\ref{lem:small_n_limit}), SelPE concentrates the privacy budget on repeated multi-batch top-$1$ structure-aware selections. 
It consists of three components: 
(i) a context--schema decoupled generation pipeline that separates free-form semantic modeling from constrained realization; 
(ii) a multi-channel distance kernel that compares textual, categorical, and numeric fields in their native spaces; and 
(iii) a progressive evolution strategy that uses selected samples to guide subsequent generation while maintaining diversity.

\subsection{Context-Schema Decoupled Generation}
\label{sec:4_1two_stage}

To ensure the structural validity of LLM outputs, prior work commonly relies on constrained decoding~\citep{zhengsglang}.
However, strict schema enforcement substantially narrows the token space, often degrading semantic expressiveness and increasing decoding cost, particularly under long or complex prompts~\citep{banerjeecrane,moskal2025llguidance}.
At the same time, structured text records represent coherent real-world instances rather than independent field values~\citep{wangstruct}.
To reconcile this tension, SelPE adopts a \emph{context--schema decoupled generation} paradigm that separates semantic abstraction from schema-grounded realization.

\textbf{Stage I: Latent Data Context.}
For each synthesized sample $s$, SelPE first constructs a \emph{latent record context} $t$ that captures a coherent real-world instance and its cross-field semantics in free-form text.
Formally, let $\mathcal{M}_\theta$ be a language model with parameters $\theta$, inducing a conditional distribution $p_\theta(\cdot)$ over token sequences.
Given a prompt context, the model samples a latent description
$t \sim p_{\theta}(t \mid \text{prompt})$.
We treat $t$ as a latent semantic variable and postpone schema enforcement to the subsequent stage, thereby decoupling semantic modeling from structural realization.

\textbf{Stage II: Schema-Grounded Realization.}
In the second stage, the latent data context $t$ is realized as a structured record that strictly conforms to the target schema $\mathcal{S}$.
A schema 
$
\label{eq:schema}
\mathcal{S} = (\mathcal{C}, \{\mathcal{V}_j\}_{j=1}^d, \mathcal{R}),
$
where $\mathcal{C}=\{c_1,\dots,c_d\}$ denotes field labels, $\mathcal{V}_j$ specifies the domain of field $c_j$, and $\mathcal{R}$ encodes constraints.
Let $p_\theta(y_k \mid y_{<k}, t)$ denote the base distribution at decoding step $k$ from $\mathcal{M}_\theta$,
conditioned on the previously generated prefix $y_{<k}$ and the latent record context $t$.
Let $\mathcal{A}_{\mathcal{S}}$ be a schema-induced constraint automaton that specifies admissible next tokens given prefix $y_{<k}$.
The constrained decoding distribution is defined as
\begin{equation}
\label{eq:constrained_decoding}
p_{\theta}^{\mathcal{S}}(y_k \mid y_{<k}, t)
\;\propto\;
p_\theta(y_k \mid y_{<k}, t)\;
\mathbf{1}\!\left[y_k \in \mathcal{V}_{\mathcal{S}}(y_{<k})\right],
\end{equation}
where $\mathcal{V}_{\mathcal{S}}(y_{<k})$ denotes the set of schema-valid tokens permitted by $\mathcal{A}_{\mathcal{S}}$ after prefix $y_{<k}$, and $\mathbf{1}[\cdot]$ denotes the indicator function that masks out schema-invalid tokens.
Decoding proceeds by iteratively sampling $y_k \sim p_{\theta}^{\mathcal{S}}(\cdot \mid y_{<k}, t)$ until termination, yielding a token sequence $y_{1:L}$, which is parsed into a structured record $s \in \mathbb{S}$ according to $\mathcal{S}$. We instantiate $\mathcal{A}_{\mathcal{S}}$ using \textsc{LLGuidance}~\citep{moskal2025llguidance}.

\subsection{Multi-Channel Distance Kernel}
\label{sec:4_2distance_kernel}

Structured text consists of heterogeneous fields (textual, categorical, and numeric) that differ in representation and scale.
To enable reliable comparison under this heterogeneity, SelPE defines a \emph{multi-channel distance kernel} $g(\cdot,\cdot)$ that evaluates each field in its native space and aggregates distances via late fusion.

\textbf{Hybrid representation.}
Given a schema $\mathcal{S}$ (Eq.~\eqref{eq:schema}), each structured record
$s \in \mathbb{S}$ is mapped to
\begin{equation}
\label{eq:hybrid_repr}
\Phi(s)
=
\bigl(
\{ \mathbf{z}_{\alpha}(s) \}_{\alpha \in \mathcal{K}_z},
\;
\{ x_{\beta}(s) \}_{\beta \in \mathcal{K}_x}
\bigr),
\end{equation}
where $\mathcal{K}_z$ indexes semantic embedding channels and $\mathcal{K}_x$ indexes scalar numeric channels, explicitly separating semantic and numeric signals.

\textbf{Channel-wise encoding.}
Each field contributes one or more channels depending on its type.

\emph{Semantic channels.}
For a field $c_j$ with value $\xi_j$, SelPE constructs a semantic representation by encoding the field--value pair via a deterministic lexicalization function $\ell(c_j, \xi_j)$.
A semantic embedding is then obtained using a sentence encoder $f_{\mathrm{emb}}$ followed by $\ell_2$ normalization:
\begin{equation}
\label{eq:text_embed}
\mathbf{z}_j(s)
=
\frac{
f_{\mathrm{emb}}\!\bigl(\ell(c_j, \xi_j)\bigr)
}{
\left\|
f_{\mathrm{emb}}\!\bigl(\ell(c_j, \xi_j)\bigr)
\right\|_2
}.
\end{equation}
This construction applies uniformly to textual, categorical, and numeric fields, enabling a unified semantic treatment across heterogeneous attributes.

In addition to field-level semantic, SelPE also constructs a \emph{global semantic representation}.
Specifically, a global embedding $\mathbf{z}_{\mathrm{glob}}(s)$ is obtained by encoding a canonical linearization of the entire record, formed by concatenating $\{\ell(c_j, \xi_j)\}_{j=1}^d$ in a fixed schema order.
This global channel captures cross-field coherence beyond individual.

\emph{Numeric channels.}
For numeric fields, SelPE additionally introduces a scalar channel to preserve numerical precision,
yielding a dual-channel representation
$\bigl(\mathbf{z}_j(s),\;x_j(s)\bigr),$
where $x_j(s) \in [0,1]$ is obtained via robust normalization.
Normalization statistics are estimated over the candidate set in each round, to ensure a consistent scale during selection.

\textbf{Distance kernel.}
Given two records $s$ and $s'$, their structured distance $d(s,s')$ is computed
via the following distance kernel:
\begin{equation}
\label{eq:distance_kernel}
g(s,s')
=
\sum_{\alpha \in \mathcal{K}_z}
w_{\alpha} \,
d_{\mathrm{cos}}\!\bigl(\mathbf{z}_{\alpha}(s), \mathbf{z}_{\alpha}(s')\bigr)
+
\sum_{\beta \in \mathcal{K}_x}
w_{\beta} \,
\lvert(x_{\beta}(s) - x_{\beta}(s')\rvert,
\end{equation}
where $d_{\mathrm{cos}}(a,b)=1-\langle a,b\rangle$ and $w_{\alpha}, w_{\beta} \ge 0$ are channel weights. In practice, each distance is clipped to $[0,1]$ before aggregation, and all channel weights are set equally by default. The kernel aggregates field distances
additively, ensuring that each attribute contributes independently
to the overall similarity.

\subsection{Selection-Guided Progressive Evolution}
\label{sec:multi_batch_pe}

SelPE formulates private structured text synthesis as a \emph{selection-guided progressive evolution} process.
Instead of releasing noisy aggregates over private data, the privacy budget is concentrated on a small number of utility-based selections that directly influence subsequent generation.
This design is motivated by Lemma~\ref{lem:small_n_limit}, while long evolutionary chains yield diminishing returns in text synthesis~\citep{xie2024augpe}.
SelPE therefore favors compact yet influential evolution trajectories, where each private decision has persistent impact.

\textbf{Progressive conditional generation.}
SelPE synthesizes a dataset of total size $N$ over $T$ evolution rounds.
Rather than allocating synthesis uniformly,
we adopt an increasing-quota schedule
\begin{equation}
\label{eq:quota}
m_t = t\,q,
\qquad
\sum_{t=1}^{T} m_t = N,
\end{equation}
where $q$ is a base expansion unit.
Later rounds thus operate on a richer synthetic context formed by
previously selected high-quality samples,
yielding higher marginal utility per private selection.

At round $t$, SelPE generates a candidate pool
\begin{equation}
\label{eq:candidate_pool}
\mathcal{H}_t = \{ s_{t,1}, \dots, s_{t,K m_t} \},
\end{equation}
using the context--schema decoupled generation mechanism
(Sec.~\ref{sec:4_1two_stage}).
Candidate generation is conditioned on both
\emph{positive} (previously selected) and
\emph{negative} (contrastive) synthetic exemplars from round $t\!-\!1$,
biasing proposals toward regions of the structured space
validated by earlier private decisions while preserving exploration.
The multiplicity factor $K$ controls exploratory breadth.

\textbf{Multi-batch private selection.}
To stabilize selection under limited private data,
SelPE partitions the private dataset $P$ into $b$ disjoint batches
$\{B_i\}_{i=1}^{b}$.
Each batch is represented by a batch-wise empirical center
computed independently for each channel.
Specifically, for semantic vector channels $k \in \mathcal{K}_v$
and numeric scalar channels $\ell \in \mathcal{K}_u$,
the batch center is defined as
\begin{equation}
\label{eq:batch_center}
\Phi\!\bigl(\mu(B_i)\bigr)
=
\Bigl(
\Bigl\{\frac{\sum_{x \in B_i} v_k(x)}{\bigl\|\sum_{x \in B_i} v_k(x)\bigr\|_2}\Bigr\}_{k \in \mathcal{K}_v},
\;
\Bigl\{\frac{1}{|B_i|}\sum_{x \in B_i} u_\ell(x)\Bigr\}_{\ell \in \mathcal{K}_u}
\Bigr),
\end{equation}
corresponding to a normalized mean direction for semantic channels
and a mean aggregation for numeric channels.

Each candidate $s \in \mathcal{H}_t$ is scored against batch $B_i$ by
\begin{equation}
\label{eq:batch_utility}
u_i(s) = -\, g\!\bigl(\mu(B_i), s\bigr),
\end{equation}
where $g(\cdot,\cdot)$ is the multi-channel distance kernel.
This yields $b$ independent utility signals without aggregating
the private data into a single statistic.
Since $g(\cdot,\cdot)$ is bounded in $[0,1]$,
SelPE applies the Top-$1$ Exponential Mechanism
(Eq.~\eqref{eq:exp_mech_top1}) to each batch
with $\varepsilon_t$:
\begin{equation}
\label{eq:exp_mech_batch}
\Pr\!\left[\tilde{s}_{t,i} = s \right]
\propto
\exp\!\left(
\varepsilon_t \, u_i(s)
\right),
\qquad s \in \mathcal{H}_t.
\end{equation}
By parallel composition, the $b$ selections together satisfy
$\varepsilon_t$-DP.
The selected set
$\tilde{S}_t=\{\tilde{s}_{t,1},\dots,\tilde{s}_{t,b}\}$
thus consists of synthetic samples most aligned with the private data
under each batch view.

\textbf{Non-private contrastive evolution.}
To encourage diversity without incurring additional privacy cost,
each private winner $\tilde{s}_{t,i}$ is paired with a contrastive
synthetic sample
\begin{equation}
\label{eq:contrastive}
\hat{s}_{t,i}
=
\arg\max_{s \in \mathcal{H}_t}
g\!\bigl(s, \tilde{s}_{t,i}\bigr),
\end{equation}
which depends only on distances among synthetic candidates.
These contrastive samples push subsequent generation away from already selected modes. The paired exemplars $\{(\tilde{s}_{t,i}, \hat{s}_{t,i})\}_{i=1}^{b}$ are used as few-shot conditioning inputs for the next round.

\emph{At the end of each round, SelPE aggregates the selected candidates to construct the final synthetic dataset.}
Formally, we maintain a cumulative set of DP-selected winners
$
\mathcal{S}_{\le t} =\bigcup_{r=1}^{t} \tilde{S}_{r},
$
where $\tilde{S}_{r}$ denotes the batch-wise Top-$1$ selections obtained at evolution round $r$.
This cumulative set $\mathcal{S}_{\le t}$ is summarized by its aggregated center in the hybrid representation space.
Candidates in the current pool $\mathcal{H}_t$ are then ranked by their distance to this cumulative center, and the top-$m_t$ samples are selected for inclusion.
A full algorithmic description is provided in Appendix~\ref{app:algorithm}.
 
\section{Experiments}
\label{sec:experiment}

\subsection{Settings}
\label{sec:exp_settings}

\textbf{Datasets.}

\textbf{Datasets.}
We evaluate SelPE on three structured text datasets with strict schema constraints.
\textbf{Water}~\citep{water} contains water bottle reviews with 5 fields,
\textbf{MIMIC-ED}~\citep{mimic-iv-ed-2.2} consists of clinical triage records with 7 fields,
and \textbf{Loan}~\citep{lendingclub_kaggle} comprises LendingClub-style financial records with 13 fields.
To enrich textual content without introducing external information,
we prompt \texttt{gpt-4o-mini}~\citep{openai_gpt4o_mini} to rewrite selected fields:
the \textit{chiefcomplaint} field in MIMIC-ED is expanded into a more descriptive sentence,
and Loan records include an additional \textit{financial\_profile\_desc} field summarizing multiple credit-related attributes.

\textbf{Baselines.}
We compare SelPE with five representative private text data synthesis methods:
\textbf{DP-DS}~\citep{yudifferentially} (DP-SGD on downstream models),
\textbf{DP-Gen}~\citep{yue2023synthetic} (DP-finetuned LLM-based generation),
\textbf{Aug-PE}~\citep{xie2024augpe} (text-oriented private evolution),
\textbf{WASP}~\citep{zoucontrastiveSynthesizing} (collaborative PE with DP voting),
and \textbf{CTCL}~\citep{tansynthesizing} (DP topic modeling with fine-grained text generation).

\begin{table*}[b]
\centering
\caption{Main results on three datasets (Water, MIMIC, Loan) under different privacy budgets ($\varepsilon$).}
\label{tab:main_results}
\setlength{\tabcolsep}{4pt}
\renewcommand{\arraystretch}{1.2} 

\resizebox{\textwidth}{!}{%
\begin{tabular}{ll | ccc | ccc | ccc | ccc}
\Xhline{1.2pt} 

\rowcolor{headergray}
&
&
\multicolumn{3}{c|}{$\boldsymbol{\varepsilon=\infty}$} & 
\multicolumn{3}{c|}{$\boldsymbol{\varepsilon=4}$} & 
\multicolumn{3}{c|}{$\boldsymbol{\varepsilon=2}$} & 
\multicolumn{3}{c}{$\boldsymbol{\varepsilon=1}$} \\

\rowcolor{headergray}
\multirow{-2}{*}{\textbf{Dataset}} & 
\multirow{-2}{*}{\textbf{Method}}&
RoBERTa & TabSTAR & R-CFG & 
RoBERTa & TabSTAR & R-CFG & 
RoBERTa & TabSTAR & R-CFG & 
RoBERTa & TabSTAR & R-CFG \\

\Xhline{1.2pt} 

& DP-DS & 57.54 & 56.97 & -- & 45.81 & 52.88 & -- & 45.81 & 52.23 & -- & 45.81 & 44.61 & -- \\
\rowcolor{mygray} \cellcolor{white} 
& DP-Gen & 83.69 & \underline{61.91} & \textbf{100} & 77.65 & \underline{59.07} & \textbf{100} & 76.85 & 53.67 & \underline{99.95} & 78.03 & \underline{58.74} & \textbf{100} \\
& AUG-PE& 64.99 & 52.11 & 98.35 & 58.96 & 52.11 & 52.10 & 63.17 & \underline{54.63} & 52.15 & 64.46 & 53.93 & 53.15 \\
\rowcolor{mygray} \cellcolor{white}
& WASP& \underline{85.08} & 57.38 & 55.40 & \underline{82.15} & 57.02 & 53.25 & \textbf{85.34} & 54.15 & 62.05 & \underline{78.52} & 52.22 & 39.45 \\
& CTCL & 44.25 & 59.78 & 96.15 & 52.08 & 51.25 & 0     & 55.75 & 50.58 & 0     & 52.43 & 50.21 & 0 \\
\rowcolor{mygray} \cellcolor{white}
\multirow{-6}{*}{Water} & \textbf{Ours}& \textbf{86.95} & \textbf{66.80} & \textbf{100} & \textbf{86.62} & \textbf{65.07} & \textbf{100} & \underline{84.98} & \textbf{64.89} & \textbf{100} & \textbf{85.86} & \textbf{63.15} & \textbf{100} \\

\hline

& DP-DS & \underline{70.48} & 56.47 & -- & 55.58 & 52.91 & -- & 55.58 & 47.26 & -- & 55.58 & 48.50 & -- \\ 
\rowcolor{mygray} \cellcolor{white}
& DP-Gen & 56.22 & \underline{63.46} & 78.94 & 55.07 & \underline{53.46} & \underline{66.20} & 51.35 & 51.89 & \underline{68.05} & 50.96 & \underline{50.70} & \underline{66.20} \\
& AUG-PE & 54.22 & 56.04 & \textbf{96.95} & 54.26 & 47.53 & 54.20 & 50.06 & 51.29 & 55.10 & 40.59 & 47.21 & 56.00 \\
\rowcolor{mygray} \cellcolor{white}
& WASP   & {70.27} & 52.21 & 45.75 & \underline{63.57} & 52.55 & 49.75 & \textbf{68.02} & \underline{53.17} & 56.25 & \underline{62.85} & 43.19 & 52.75 \\
& CTCL   & 54.63 & 49.31 & 8.36  & 48.42 & 52.06 & 0     & 48.53 & 46.95 & 0     & 50.62 & 48.54 & 0 \\
\rowcolor{mygray} \cellcolor{white}
\multirow{-6}{*}{Mimic} & \textbf{Ours} & \textbf{75.15} & \textbf{68.67} & \underline{95.20} & \textbf{66.08} & \textbf{59.34} & \textbf{87.05} & \underline{62.44} & \textbf{59.65} & \textbf{91.90} & \textbf{65.06} & \textbf{53.20} & \textbf{95.80} \\
\hline

& DP-DS & 49.84 & 55.30 & -- & 47.74 & 53.57 & -- & 47.74 & \underline{52.63} & -- & 47.73 & 45.48 & -- \\
\rowcolor{mygray} \cellcolor{white}
& DP-Gen& 53.64 & \underline{50.42} & 83.08 & 54.41 & \underline{52.21} & \underline{95.37} & 54.76 & 48.86 & \underline{94.89} & 53.42 & 47.44 & \underline{94.67} \\
& AUG-PE & 49.01 & 49.47 & 90.50 & 49.33 & 49.56 & 49.54 & 48.72 & 51.46 & 49.04 & 50.01 & 49.12 & 50.50 \\
\rowcolor{mygray} \cellcolor{white}
& WASP & \textbf{58.51} & 49.64 & 14.43 & \textbf{58.22} & 51.89 & 12.11 & \textbf{58.01} & 49.64 & 22.36 & \underline{55.78} & \underline{53.11} & 18.36 \\
& CTCL & 49.86 & 48.98 & \textbf{100} & 48.64 & 52.11 & 0     & 50.67 & 46.94 & 0     & 51.00 & 49.03 & 0 \\
\rowcolor{mygray} \cellcolor{white}
\multirow{-6}{*}{Loan} & \textbf{Ours} & \underline{55.54} & \textbf{55.27} & \underline{93.21} & \underline{57.18} & \textbf{56.43} & \textbf{97.57} & \underline{55.55} & \textbf{53.21} & \textbf{98.21} & \textbf{58.27} & \textbf{53.21} & \textbf{95.00} \\

\Xhline{1.2pt} 

\end{tabular}%
}
\end{table*}


\textbf{Metrics.}
We evaluate synthetic data quality from two perspectives: \emph{utility}, which measures the preservation of task-relevant information for downstream learning, and \emph{fidelity}, which assesses the structural correctness and completeness of generated records.

\emph{Utility.}
We train downstream classifiers on synthetic data and evaluate them on held-out real test sets (see Appendix~\ref{app:dseval}).
We report ROC-AUC as the utility metric.
To capture utility under different modeling assumptions, we consider two downstream predictors.
\emph{Text-centric utility.}
We fine-tune RoBERTa~\citep{liu2019roberta} by treating each synthesized record as plain text, assessing whether semantic information is preserved for text-based models.
\emph{Structured-text utility.}
We train TabSTAR~\citep{arazi2025tabstar}, a foundation model for tabular data with textual fields, to evaluate whether structured dependencies and cross-field interactions are retained.
 For baselines that do not produce valid JSON outputs, structured fields are extracted via a regular-expression-based parser.

\emph{Fidelity.}
Beyond utility, structured text synthesis requires generated samples to satisfy schema constraints and remain directly usable.
However, many existing text-based synthesis methods do not reliably produce fully structured outputs, making strict schema validation infeasible.
We therefore report two complementary fidelity metrics based on \emph{Context-Free Grammar (CFG)} validation.
\emph{Rough-CFG (R-CFG)} measures \emph{data completeness} by checking whether required fields are present and parsable under a relaxed CFG, enabling fair comparison with partially structured baselines.
\emph{Strict-CFG (S-CFG)} follows the protocol of \citet{wangstruct} and enforces full schema validity, including structural consistency and value-range plausibility.
Since SelPE employs schema-constrained decoding, all generated samples can be evaluated under S-CFG.

\textbf{Synthesis settings.}
Expect the transfer experiments, all methods use \texttt{Llama-3.1-8B-Instruct}~\citep{dubey2024llama} as the base generator. We instantiate the semantic encoder $f_{\mathrm{emb}}$ using \texttt{gte-large-en-v1.5}.
In the main experiments, we sample \textbf{20 records per class} to form the private dataset.
All methods operate under this fixed small-sample regime.
We generate $20\times$ synthetic records per class, resulting in $N=2000$ samples for \emph{Water} and \emph{MIMIC} (5 classes) and $N=2800$ samples for \emph{Loan} (7 classes).
For SelPE, we set the candidate multiplicity $K=3$, evolution rounds $T=5$, and temperature $\tau=1.2$.
\vspace{-5pt}

\subsection{Main Results}
\label{sec:exp_main}

Table~\ref{tab:main_results} summarizes the main results across datasets whose
\emph{schema complexity} (number of fields) and \emph{text length} increase from
\emph{Water} to \emph{MIMIC-ED} and \emph{Loan}.
Across all privacy budgets, SelPE achieves the most \emph{consistent} performance
on both evaluation axes, namely RoBERTa for holistic-text utility and TabSTAR for
channel-wise structured utility, while maintaining high fidelity (R-CFG).
Notably, SelPE’s advantage becomes more pronounced as records grow longer and
more heterogeneous.
Under tight privacy budgets ($\varepsilon \in \{2,1\}$), SelPE preserves
downstream utility on \emph{MIMIC-ED} and \emph{Loan}, whereas many baselines
exhibit sharp degradation or fail to produce usable structured outputs.

We observe that some methods achieve competitive, and occasionally stronger, performance under RoBERTa-based evaluation, particularly for \emph{Water} and under non-private or mild privacy settings ($\varepsilon=\infty,4$).
This is expected, as holistic text evaluation primarily reflects global semantic fluency and benefits from increased generative diversity. However, such gains do not consistently translate to structure-aware utility. 
Under tighter privacy budgets and for more complex records, these methods degraded performance on TabSTAR, indicating difficulties in maintaining cross-field consistency.
In contrast, SelPE consistently performs well.

\textbf{Obs.~1. Increasing schema/length exposes a text–structure gap, which SelPE closes.}
As records become longer and more heterogeneous, holistic-text evaluation
(RoBERTa) can be satisfied by fluent narratives that still drift on numeric
values or cross-field relations.
TabSTAR, which consumes field-separated channels, is more sensitive to such
inconsistencies.
SelPE remains strong on TabSTAR across settings, indicating that its gains stem
from preserving structured dependencies rather than surface-level text fluency.
This gap is evident when baselines score competitively on RoBERTa but lag on
TabSTAR and fidelity.

\textbf{Obs.~2. Structural validity alone is insufficient; mixed-type alignment drives utility.}
Several baselines achieve high R-CFG in some regimes (e.g., DP-Gen on Water),
yet still underperform SelPE in downstream utility, especially on TabSTAR.
This shows that producing parseable fields does not guarantee numeric accuracy
or cross-field consistency.
SelPE’s advantage is most pronounced on structured-text utility, reflecting
better alignment across fields.

\textbf{Obs.~3. RoBERTa and TabSTAR expose complementary failure modes.}
Across datasets, some methods perform well on text-centric evaluation but
degrade on structured-text utility, revealing fluent yet inconsistent
generations.
SelPE consistently ranks at or near the top on \emph{both} RoBERTa and TabSTAR
under tight privacy, indicating that its evolution preserves both semantic
coherence and structured interactions.
Even when a baseline attains higher RoBERTa scores in intermediate regimes,
SelPE remains stronger on structure-aware utility and fidelity.

\vspace{-8pt}
\subsection{Backbone Transferability}
\label{sec:exp_backbone}

\begin{table*}[t]
\centering
\caption{Transferability of SelPE across different backbone LLMs under various privacy budgets.}
\label{tab:backbone}
\setlength{\tabcolsep}{4pt}
\renewcommand{\arraystretch}{1.2}

\resizebox{\textwidth}{!}{%
\begin{tabular}{l | ccc | ccc | ccc | ccc}
\Xhline{1.2pt}

\rowcolor{headergray}
& \multicolumn{3}{c|}{$\boldsymbol{\varepsilon=\infty}$}
& \multicolumn{3}{c|}{$\boldsymbol{\varepsilon=4}$}
& \multicolumn{3}{c|}{$\boldsymbol{\varepsilon=2}$}
& \multicolumn{3}{c}{$\boldsymbol{\varepsilon=1}$} \\

\rowcolor{headergray}
\multirow{-2}{*}{\textbf{Backbone}} 
& RoBERTa & TabSTAR & S-CFG
& RoBERTa & TabSTAR & S-CFG
& RoBERTa & TabSTAR & S-CFG
& RoBERTa & TabSTAR & S-CFG \\

\Xhline{1.2pt}

Llama-3.1-8B-Instruct        & \textbf{75.15} & \textbf{68.67} & 95.20 & \underline{66.08} & 59.34 & 87.05 & 62.44 & \underline{59.65} & 91.90 & 65.06 & 53.20 & 95.80 \\

\rowcolor{mygray} 
Gemma-2-9B-Instruct       & 64.20 & 60.21 & 49.75 & 63.57 & \underline{59.48} & 43.95 & 62.71 & 59.48 & 61.50  & 62.01 & 54.02 & 82.30  \\

Mistral-7B-Instruct & 66.95 & 66.44 & 62.35 & 64.97 & 57.15 & 39.60  & 61.79 & 53.37 & 27.20  & 57.20 & 52.36 & 93.65 \\

\rowcolor{mygray} 
Qwen-2.5-7B-Instruct         & 63.85 & 64.74 & \underline{95.65} & 64.50 & \textbf{60.10} & \textbf{99.50}  & \underline{63.20} & 59.26 & \textbf{100}   & \textbf{69.29} & \underline{57.40} & \underline{99.25} \\

Qwen-2.5-14B-Instruct        & \underline{70.24} & \underline{66.52} & \textbf{99.85} & \textbf{67.75} & 56.23 & \underline{95.60}  & \textbf{66.53} & \textbf{60.12} & \underline{99.95} & \underline{68.64} & \textbf{60.11} & \textbf{99.95} \\

\Xhline{1.2pt}
\end{tabular}%
}
\end{table*}

We study the transferability of SelPE across backbone LLMs by instantiating the two-stage generator
(Sec.~\ref{sec:4_1two_stage}) with a diverse set of instruction-tuned models:
Llama-3.1-8B-Instruct~\citep{dubey2024llama},
Gemma-2-9B-Instruct~\citep{team2024gemma},
Mistral-7B-Instruct-v0.3~\citep{jiang2023mistral7b},
Qwen-2.5-7B-Instruct, and Qwen-2.5-14B-Instruct~\citep{qwen2.5}.
All other components are kept identical. As shown in Tabel ~\ref{tab:backbone}, across a diverse set of backbone LLMs with different architectures and scales, SelPE consistently maintains strong utility and high schema fidelity under all
privacy budgets.
While stronger backbones yield higher absolute performance, the relative gains of SelPE are stable, indicating that its effectiveness is largely backbone-agnostic and stems from the proposed selection-guided evolution rather than reliance on a specific generator.

\textbf{Obs.~4. Consistent structured performance across backbones.}
As shown in Table~\ref{tab:backbone}, SelPE preserves high structured-text utility (TabSTAR) and near-perfect schema validity (S-CFG) across all tested backbones, even under tight privacy budgets ($\varepsilon=2,1$). Although RoBERTa-based utility varies with backbone capacity—reflecting differences in holistic text generation quality—the structured evaluation remains stable.
This suggests that SelPE effectively decouples structured data quality from backbone-specific generation variability, enabling reliable private synthesis under heterogeneous model choices.

\subsection{Stability under Data-Scale Variations}
\label{sec:exp_scale}

We evaluate the robustness of different methods under varying degrees of private data availability on the MIMIC dataset. Specifically, we vary the private data size \emph{per class} with $n \in \{5,10,20,35,50\}$ while fixing the synthetic size to $N=2000$ and $\varepsilon=2$.

Figure~\ref{fig:scale} analyzes how different synthesis methods respond to
increasing private sample size $n$.
Across both RoBERTa-based (holistic text) and TabSTAR-based (structured)
evaluations, SelPE consistently achieves the strongest and most stable
performance, particularly in the small- to medium-sample regimes.
While some baselines benefit from larger private datasets, their gains are
uneven across evaluation axes, whereas SelPE exhibits robust improvements that
are consistent with both semantic and structural utility.

\begin{figure}[!htbp]
  \centering
  \includegraphics[
    width=\linewidth,
    height=\textheight,
    keepaspectratio
  ]{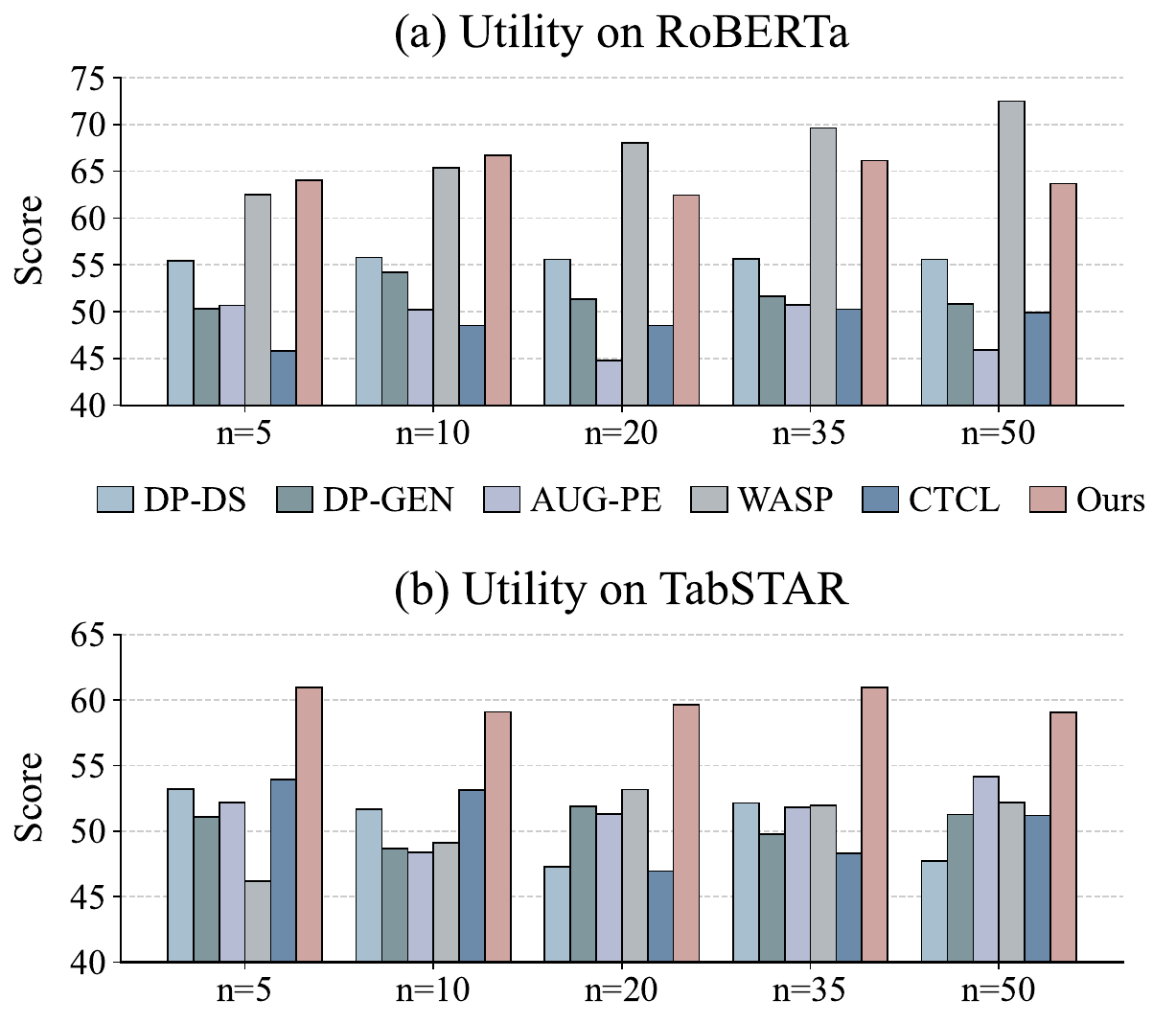}
  \caption{Downstream utility under varying data sizes $n$}
  \label{fig:scale}
  \vspace{-10pt}
\end{figure}

\textbf{Obs.~5. SelPE is robust to data scale and excels in low-$n$ regimes.}
Across $n\in[5,50]$, SelPE consistently outperforms all baselines on both
RoBERTa and TabSTAR.
Notably, SelPE already achieves strong utility at very small sample sizes
($n=5,10$), where other methods are more affected by noise and instability.
This suggests that SelPE effectively concentrates limited privacy budget on
high-impact selections, reducing sensitivity to data scale.

\textbf{Obs.~6. Model diversity mainly benefits holistic text utility.}
Methods leveraging model diversity tend to improve with larger $n$ under
RoBERTa-based evaluation, reflecting gains in global semantic richness.
However, these improvements do not consistently translate to TabSTAR,
indicating that enhanced holistic semantics alone is insufficient to ensure
robust cross-field structure.

\subsection{Hyperparameter Analysis}
\label{sec:exp_hparam}

We analyze the sensitivity of SelPE to key hyperparameters that control the evolution process, including the number of evolution rounds $T$, candidate multiplicity $K$, batch count $b$, and generation temperature $\tau$.
Figure~\ref{fig:hyper_analysis_single} reports downstream utility under each setting,
evaluated using both RoBERTa and TabSTAR classifiers.

\begin{figure}[!h]
  \centering
  
  \begin{subfigure}[b]{0.48\linewidth}
    \centering
    \includegraphics[width=\linewidth]{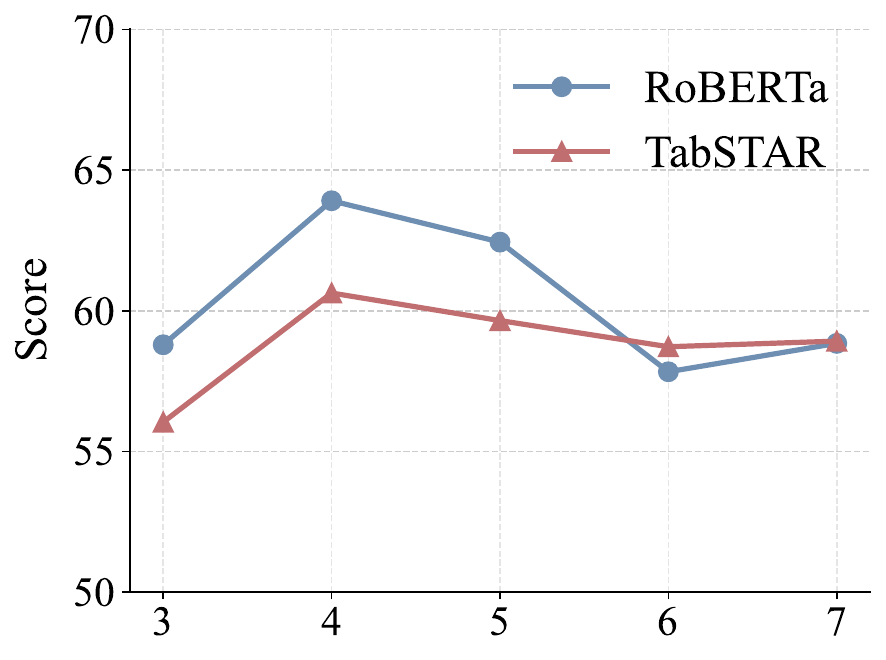}
    \caption{Evolution rounds ($T$)}
    \label{fig:hyper_round_s}
  \end{subfigure}
  \hfill 
  \begin{subfigure}[b]{0.48\linewidth}
    \centering
    \includegraphics[width=\linewidth]{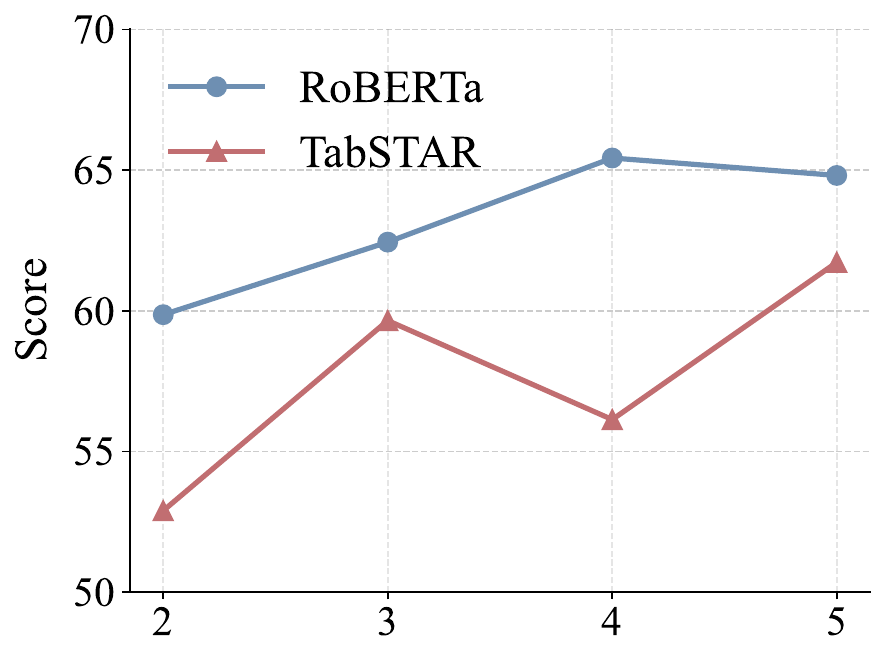}
    \caption{Candidates multiplicity ($K$)}
    \label{fig:hyper_k_s}
  \end{subfigure}
  
  \par\smallskip
  \begin{subfigure}[b]{0.48\linewidth}
    \centering
    \includegraphics[width=\linewidth]{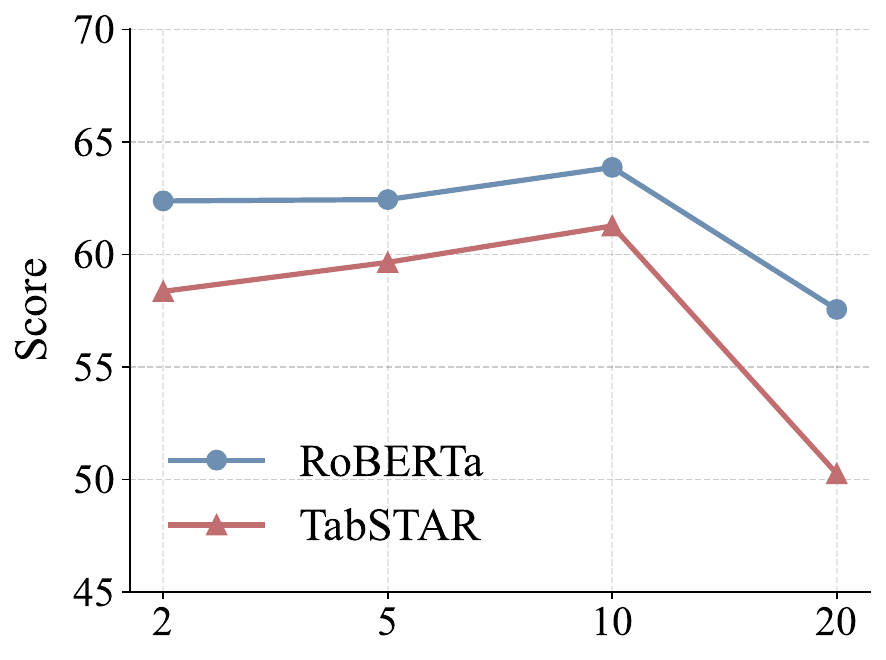}
    \caption{Batch Count ($b$)}
    \label{fig:hyper_bs_s}
  \end{subfigure}
  \hfill
  \begin{subfigure}[b]{0.48\linewidth}
    \centering
    \includegraphics[width=\linewidth]{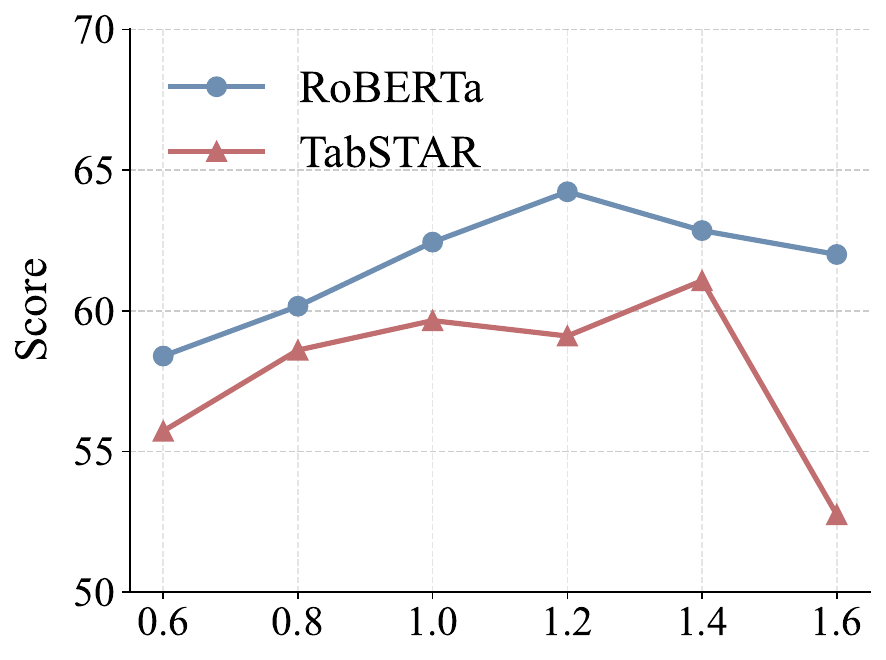}
    \caption{Temperature ($\tau$)}
    \label{fig:hyper_temperature}
  \end{subfigure}

  \caption{Hyperparameter sensitivity of SelPE.}
  \label{fig:hyper_analysis_single}
  \vspace{-10pt}
\end{figure}

\textbf{Obs.~7. Moderate evolution depth yields the best utility--stability trade-off.}
SelPE achieves peak performance with a small number of evolution rounds ($T\!\approx\!4$--$5$)
and moderate candidate multiplicity ($K\!\approx\!3$--$4$).
Increasing $T$ or $K$ beyond this range does not consistently improve utility and may even degrade performance,
suggesting diminishing returns once high-confidence selections have sufficiently shaped the generation space.
This behavior supports SelPE’s design choice of compact, selection-driven evolution
rather than deep or overly exploratory private generation.

\textbf{Obs.~8. Selection stability is sensitive to batch granularity and sampling temperature.}
Performance remains stable for small to moderate batch counts ($b\!\leq\!10$),
while overly large batches weaken selection signals by averaging heterogeneous private evidence.
Similarly, intermediate temperatures ($\tau\!\approx\!1.0$--$1.2$) provide the best balance between diversity and fidelity, whereas excessive randomness degrades structural and semantic alignment.
These trends highlight that SelPE benefits from controlled diversity,
but relies on stable, high-signal selection to guide evolution effectively. Together, these results indicate that SelPE’s performance is governed by the quality of selection signals rather than excessive exploration.

\subsection{Ablation Study}
\label{sec:exp_ablation}
We conduct ablation studies on the MIMIC dataset under $\varepsilon=2$
to assess the contribution of each key component in SelPE.
Table~\ref{tab:ablation} reports downstream utility evaluated with
RoBERTa and TabSTAR, where each variant removes exactly one design component
while keeping all other settings and privacy budgets fixed.

\begin{table}[!htbp]
    \centering
    \caption{Ablation study on different components of our method. \textbf{Bold} indicates the best performance.}
    \label{tab:ablation}
    \setlength{\tabcolsep}{8pt} 
    \renewcommand{\arraystretch}{1.2} 

    \begin{tabular}{l | c c}
        \Xhline{1.2pt}

        \rowcolor{headergray}
        \textbf{Variant} & \textbf{RoBERTa} & \textbf{TabSTAR} \\

        \Xhline{1.2pt}

        \textbf{Ours (Full)} & \textbf{62.44} & \textbf{59.65} \\

        \hline

        w/o Contrastive Example & 59.19 & 51.93 \\
        
        \rowcolor{mygray} 
        w/o Progressive Quota & 61.36 & 54.22 \\
        
        w/o Schema-aware Generation & 60.32 & 51.27 \\
        
        \rowcolor{mygray} 
        w/o Kernel $g(\cdot)$ & 62.38 & 55.13 \\

        \Xhline{1.2pt}
    \end{tabular}
\end{table}

\textbf{Obs.~9. Each component of SelPE contributes to utility, with the largest gains on structured-text evaluation.}
Removing any major component leads to a consistent performance drop, particularly on TabSTAR, indicating that SelPE’s improvements are not driven by a single design choice.
Eliminating contrastive bad-case examples causes the largest degradation, highlighting the importance of non-private contrastive signals for maintaining diversity and avoiding mode collapse.
Removing progressive quota scheduling or two-stage generation also degrades performance, suggesting that both selection accumulation and context--schema decoupling are critical for stable evolution.
Finally, removing the multi-channel distance kernel $g(\cdot)$ has limited impact on RoBERTa but substantially harms TabSTAR utility, confirming that field-aware distance modeling is essential for preserving cross-field structure rather than merely improving surface-level text quality.

\section{Conclusion}

This work introduces SelPE, a selection-guided framework for differentially private structured text synthesis tailored to the small-sample setting.
By shifting from noise-based statistical aggregation to structure-aware private selection, SelPE better preserves both semantic utility and cross-field consistency under strict privacy constraints.
SelPE combines context–schema decoupled generation with multi-channel, structure-aware selection to reliably guide the synthesis process.
Across datasets with increasing schema complexity, SelPE achieves stable performance on both holistic-text and structure-aware downstream evaluations, while maintaining high structural validity and conservative empirical privacy behavior.
Overall, our results demonstrate that concentrating the privacy budget on structured selection, together with explicitly designing structure-aware mechanisms, is an effective and principled design choice for private structured text synthesis, and represents a promising direction for DP structured text synthesis.

\begin{acks}
This work was supported by the Beijing Natural Science Foundation Program (Grant No.~L232002), the Guangxi Key Research and Development Program (Grant No.~FN2504240005), the National Natural Science Foundation of China (Grant No.~62471064), and the Fundamental Research Funds for the Beijing University of Posts and Telecommunications (Grant No.~2025AI4S02).
\end{acks}

\bibliographystyle{ACM-Reference-Format}
\bibliography{reference}

\appendix

\section{Algorithmic Details}
\label{app:algorithm}
The core algorithm of the proposed \textsc{SelPE} is described in
Sec.~\ref{sec:method}.
We include the full pseudocode here to make the procedure precise and
fully reproducible.

\begin{algorithm}[htbp]
\caption{\textsc{SelPE}: Selection-Guided Progressive Evolution}
\label{alg:sel_pe}
\begin{algorithmic}[1]
\Require $P,\ \mathcal{S},\ \mathcal{M}_\theta,\ g,\ \Phi,\ T,\ N,\ d,\ K,\ b,\ \{\varepsilon_t\}_{t=1}^T$
\Ensure Synthetic dataset $S_{\mathrm{syn}}$

\State $S_{\mathrm{syn}}\gets\emptyset$;\quad $\mathcal{E}\gets\emptyset$
\hspace{0.5em}{\footnotesize\emph{// evolution context (Sec.~\ref{sec:multi_batch_pe})}}

\For{$t=1$ \textbf{to} $T$}

    \State $m_t \gets \lceil t\cdot d \rceil$
    \hspace{0.5em}{\footnotesize\emph{// progressive quota, Eq.~\eqref{eq:quota}}}

    \State $\mathcal{H}_t \gets \textsc{Gen}(\mathcal{M}_\theta,\mathcal{S};\,\mathcal{E},\,K m_t)$
    \hspace{0.5em}{\footnotesize\emph{// conditional generation (Sec.~\ref{sec:4_1two_stage})}}

    \State Partition $P$ into $b$ disjoint batches $\{B_i\}_{i=1}^b$

    \For{$i=1$ \textbf{to} $b$}

        \State $\mu_i \gets \textsc{Proto}(B_i;\Phi)$
        \hspace{0.5em}{\footnotesize\emph{// batch prototype, Eq.~\eqref{eq:batch_center}}}

        \ForAll{$s\in\mathcal{H}_t$}
            \State $u_i(s)\gets -\,g(\mu_i,s)$
            \hspace{0.5em}{\footnotesize\emph{// utility via distance kernel, Eq.~\eqref{eq:batch_utility}}}
        \EndFor

        \State $\tilde{s}_{t,i}\sim \textsc{ExpMech}(\mathcal{H}_t,u_i,\varepsilon_t)$
        \hspace{0.5em}{\footnotesize\emph{// batch-wise top-$1$ selection, Eq.~\eqref{eq:exp_mech_batch}}}

        \State $\hat{s}_{t,i}\gets \arg\max_{s\in\mathcal{H}_t} g(s,\tilde{s}_{t,i})$
        \hspace{0.5em}{\footnotesize\emph{// non-private, Eq.~\eqref{eq:contrastive}}}

    \EndFor

    \State $\tilde{S}_t\gets\{\tilde{s}_{t,i}\}_{i=1}^b$;\quad
           $\hat{S}_t\gets\{\hat{s}_{t,i}\}_{i=1}^b$


    \Statex \hspace{1.5em}{\footnotesize\emph{// progressive inclusion and selection-conditioned evolution (Sec.~\ref{sec:multi_batch_pe})}}
    \State $S_{\mathrm{syn}}\gets S_{\mathrm{syn}}\ \cup\
       \textsc{SelectTop}(\mathcal{H}_t,\ \mathrm{Agg}(\tilde{S}_{1:t}))$
    \State $\mathcal{E}\gets \textsc{EvolveCtx}(\tilde{S}_t,\hat{S}_t,\mathcal{S})$

\EndFor

\State \Return $S_{\mathrm{syn}}$

\end{algorithmic}
\end{algorithm}

\section{Detailed Analysis of Noise-Based PE}
\subsection{Limitations of Gaussian Noise–Based PE}
\label{app:noise_limit}
\begin{proof}[Proof sketch]
Consider a private dataset $P$ of size $n$ and a synthetic candidate pool of size $M$, as commonly used in private evolution (PE).
A representative noise-based PE approach evaluates candidates by releasing a DP histogram over the candidate pool, where each bin corresponds to one synthetic candidate and records its aggregate support from the private data.

Formally, let $f(P) \in \mathbb{R}^M$ denote a histogram query over the $M$ candidates, where the $i$-th coordinate aggregates evidence from $P$ in favor of candidate $i$.
This query has $\ell_2$-sensitivity $\Delta_f$ (typically $\Delta_f = 1$).
The Gaussian mechanism releases
\begin{equation}
\tilde{f}(P) = f(P) + Z, \qquad Z \sim \mathcal{N}(0,\sigma^2 I_M),
\end{equation}
where $\sigma$ is chosen according to~\eqref{eq:gaussian_sigma}.
To characterize the perturbation magnitude, note that
\begin{equation}
\label{eq:noise_norm}
\mathbb{E}\!\left[\|Z\|_2\right]
=
\sigma \cdot \mathbb{E}\!\left[\sqrt{\chi^2_M}\right]
\approx \sigma \sqrt{M},
\end{equation}
so the typical noise energy grows with the candidate pool size $M$ and is independent of the sample size $n$.

In contrast, the signal contained in $f(P)$ is constrained by the amount of private data.
For histograms, $\|f(P)\|_1 = O(n)$; for normalized histograms, $\|f(P)\|_1 = O(1)$.
In either case, the typical per-candidate signal magnitude is at most $O(n/M)$.
As a heuristic, we define a signal-to-noise ratio (SNR) proxy as
\begin{equation}
\label{eq:snr_proxy}
\mathrm{SNR}
\;\triangleq\;
\frac{\|f(P)\|_2}{\mathbb{E}[\|Z\|_2]}
=
O\!\left(\frac{n}{\sigma M^{3/2}}\right).
\end{equation}

When $n$ is small and $M$ is large, as in small-sample private data synthesis, this ratio vanishes rapidly, and the DP-released histogram is dominated by noise. By contrast, when $n > M$, the private record dominates and histogram-based selection becomes reliable.
An analogous limitation holds for multi-output selection mechanisms based on Gumbel perturbations (top-$k$), detailed in App ~\ref{app:topk}.
\end{proof}

\subsection{Limitations of Top-k Selection via Gumbel Perturbation}
\label{app:topk}

\textbf{Top-$k$ Exponential Mechanism.}
When multiple outputs are required, top-$k$ selection under the exponential mechanism admits a one-shot implementation via independent Gumbel perturbations \citep{DPorg-one-shot-top-k}.
Specifically, sample i.i.d.\ $g_r \sim \mathrm{Gumbel}(0,\Delta_u/\varepsilon)$ for all $r \in \mathcal{R}$ and return
\begin{equation}
\label{eq:gumbel_topk}
\tilde{\mathcal{M}}^{(k)}(D)
=
\Topk\bigl(\{u(D,r)+g_r\}_{r \in \mathcal{R}}\bigr),
\end{equation}
This mechanism is distributionally equivalent to $k$ sequential exponential mechanism selections and thus admits the same privacy guarantee.

\begin{proposition}[Small-sample limitation of top-$k$ selection]
\label{prop:gumbel_topk_limit}
In the small-sample regime, top-$k$ selection via Gumbel perturbation exhibits degraded utility when applied to private evolution with a large candidate pool.
\end{proposition}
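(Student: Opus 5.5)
We make the statement quantitative in the same heuristic style as the proof of Lemma~\ref{lem:small_n_limit}. The plan has two steps: show that the per-selection privacy budget shrinks as $k$ grows, and show that the utility gaps the mechanism must resolve shrink as the pool grows while $n$ stays fixed. Put together, the Gumbel top-$k$ output approaches a uniformly random $k$-subset of the pool.

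First, the budget dilution. By the stated equivalence, Eq.~\eqref{eq:gumbel_topk} is $k$ sequential exponential-mechanism calls. To meet a total budget $\varepsilon$, each call must run at $\varepsilon' = \varepsilon/k$ under basic composition, or $\varepsilon' \approx \varepsilon/\sqrt{k\log(1/\delta)}$ under the advanced composition of Sec.~\ref{sec:DPtheorem}. The Gumbel scale is therefore $\beta = \Delta_u/\varepsilon'$, which grows at least like $\sqrt{k}\,\Delta_u/\varepsilon$. In PE one needs $k$ comparable to the number of synthetic samples kept, so $k$ grows with $M$. Contrast this with SelPE: the top-$1$ calls on disjoint batches are combined by parallel composition, so each batch keeps the full $\varepsilon_t$.

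Second, the signal. Take a PE-style utility $u(D,r)$ that aggregates support from the $n$ private records, for example vote counts (sensitivity $\Delta_u=1$). As in the proof of Lemma~\ref{lem:small_n_limit}, $\sum_r u(D,r) = O(n)$. Hence at most $n$ candidates have nonzero utility, and the typical gap between a good candidate and an average one is $O(n/M)$. For a normalized distance utility the same argument bounds the useful spread by $O(1/n)$ per record.

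Combining the two steps, I would bound the probability that a "good" candidate ranks above a "bad" one, using that the difference of two Gumbels is logistic:
\[
\Pr[u_a+g_a > u_b+g_b] = \frac{1}{1+e^{-(u_a-u_b)/\beta}} \le \frac12 + \frac{u_a-u_b}{4\beta}.
\]
With gap $\Delta = u_a-u_b = O(n/M)$ and $\beta \gtrsim \sqrt{k}/\varepsilon$, the advantage over a coin flip is $O\bigl(\varepsilon n/(M\sqrt{k})\bigr)$. This gives an effective SNR of the same vanishing form as in Lemma~\ref{lem:small_n_limit}.

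The final step turns this pairwise bound into a statement about the whole output: the expected utility of the selected set, minus that of a uniformly random $k$-subset, is $O\bigl(k\,\Delta_{\max}^2/\beta\bigr)$ or similar. I expect this to be the main obstacle, because the top-$k$ order statistics of perturbed scores are not independent. My first attempt would bound the likelihood ratio of any fixed $k$-subset under the mechanism against uniform selection directly from the exponential-mechanism density. That ratio is at most $\exp\bigl(k\,\varepsilon'\,\Delta_{\max}\bigr)$, which is close to $1$ when $n\varepsilon/M \ll 1$. This shows the output distribution is nearly uniform and hence carries little information about $P$, which is the claimed degradation.
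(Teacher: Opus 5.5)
Your Step~2 conflates the \emph{mean} score with the \emph{gap} the mechanism has to resolve, and everything downstream inherits this. In the vote-count model you chose ($\Delta_u=1$, $\sum_r u(D,r)=n$), $n/M$ is the average score. But any candidate with even one vote beats every unvoted candidate by at least $1=\Delta_u$, and the leader can be ahead by up to $n$. Normalizing the histogram changes nothing, since the mechanism only sees $u/\Delta_u$. So for the pairs that determine selection quality (supported versus unsupported), your logistic bound gives an advantage at least of order $\Delta_u/\beta=\varepsilon'$, not $O(\varepsilon n/(M\sqrt{k}))$.

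The final step breaks for the same reason. The range $\Delta_{\max}$ in your likelihood-ratio bound is at least $\Delta_u$ as soon as the data break a single tie. With $\varepsilon'=\varepsilon/k$ the factor $k$ cancels, leaving $\exp(\varepsilon\Delta_{\max}/\Delta_u)\ge e^{\varepsilon}$. This depends on neither $n$ nor $M$, and it is far from $1$ at the budgets $\varepsilon\in\{1,2,4\}$ used in the paper. Near-uniformity of the output therefore cannot be obtained this way.

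The missing idea is the $\log M$ effect, which is what the paper's sketch rests on. The damage comes not from individual gaps being small but from the sheer number of unsupported candidates. The maximum and upper order statistics of the Gumbel perturbations over the roughly $M-n$ unsupported candidates concentrate near $\beta\log M$, while the supported candidates lead by at most $O(n)$. Once $n\ll\beta\log M$, the top-$k$ ranking is set by the noise, and lower-ranked picks are even more noise-dominated.

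In exponential-mechanism form (vote counts, $\Delta_u=1$), each peeling step puts total mass at most about $n e^{\varepsilon' n}/(M-n)$ on the supported candidates. This vanishes when $M\gg n e^{\varepsilon' n}$, even though each supported candidate is individually favored by a factor up to $e^{\varepsilon' n}$. This mass bound, not pointwise closeness to uniform, is what degraded utility requires.

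Your budget-dilution observation is a legitimate extra amplification under a fixed total budget, and it enters this bound through $\varepsilon'$. The paper instead keeps the scale $\Delta_u/\varepsilon$ of Eq.~\eqref{eq:gumbel_topk} and attributes the $k$-dependence to the order statistics. Either way, dilution cannot substitute for comparing an $O(n)$ signal against $\beta\log M$ noise.
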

\begin{proof}[Proof sketch]
Consider the same private evolution setting as in Lemma~\ref{lem:small_n_limit}, with a private dataset $P$ of size $n$ and a candidate pool of size $M$.
Let $u(P,r)$ denote the utility score of candidate $r \in \mathcal{R}$, where $\mathcal{R}$ indexes the $M$ candidates.
Under the top-$k$ exponential mechanism, selection can be implemented by adding independent Gumbel noise
\[
g_r \sim \mathrm{Gumbel}(0,\Delta_u/\varepsilon)
\]
to each score and returning the $k$ largest perturbed values.

In private evolution, the utility differences $\Delta u_{r,r'} = u(P,r) - u(P,r')$ between competing candidates are induced by the private data and scale at most linearly with $n$.
In the small-sample regime, these differences are small and often comparable across many candidates.
By contrast, the Gumbel perturbation scale $\Delta_u/\varepsilon$ is fixed by the privacy budget and does not decrease with $n$.

As $M$ grows, the maximum and order statistics of the Gumbel noise concentrate around values on the order of $(\Delta_u/\varepsilon)\log M$.
Consequently, for sufficiently large $M$ and small $n$, the relative ordering of $u(P,r)+g_r$ is dominated by noise rather than by the data-dependent utilities.
This effect is amplified when selecting multiple outputs ($k>1$), since lower-ranked selections correspond to increasingly noise-dominated order statistics.

Therefore, in the small-sample regime, top-$k$ selection via Gumbel perturbation becomes unreliable for identifying high-utility candidates, leading to degraded selection quality.
\end{proof}

\subsection{Dataset Settings}
\label{app:dataset_settings}

We evaluate our method on three datasets: Water , MIMIC, and Loan.
The classification tasks involve 5, 5, and 7 classes, respectively.
For all datasets, we maintain a consistent split size: 500 for validation, and 1,000 for testing.
Table~\ref{tab:dataset_stats} details the class labels and specific sample distributions for the validation and test sets.

\begin{table}[htbp]
  \centering
  \caption{Detailed Dataset Statistics and Class-wise Distributions}
  \label{tab:dataset_stats}
  \resizebox{\columnwidth}{!}{
    \begin{tabular}{c c c c l}
      \toprule
      \textbf{Dataset} & \textbf{Labels} & \textbf{Split} & \textbf{Total} & \textbf{Class-wise Distribution} \\
      \midrule
      
      \multirow{2}{*}{Water} & \multirow{2}{*}{1--5} 
       & Val  & 500  & [100, 100, 100, 100, 100] \\
       & & Test & 1000 & [77, 39, 86, 222, 582] \\
      \midrule
      
      \multirow{2}{*}{MIMIC} & \multirow{2}{*}{1--5}
       & Val  & 500  & [88, 93, 94, 120, 105] \\
       & & Test & 1000 & [201, 209, 201, 195, 194] \\
      \midrule
      
      \multirow{2}{*}{Loan} & \multirow{2}{*}{1--7}
       & Val  & 500  & [69, 74, 63, 74, 81, 84, 55] \\
       & & Test & 1000 & [141, 144, 158, 125, 149, 151, 132] \\
       
      \bottomrule
    \end{tabular}
  }
\end{table}

\subsection{Downstream Evaluation Settings}

\label{app:dseval}

For multi-class classification tasks, we employ the One-vs-Rest (OvR) strategy combined with macro-averaging to compute the ROC-AUC score. Specifically, we calculate the AUC for each class against all others and reporting the arithmetic mean across all classes as follows

\begin{equation}
    \text{ROC-AUC} = \frac{1}{|\mathcal{C}|} \sum_{k \in \mathcal{C}} \text{AUC}(\mathbb{I}(y = k), \hat{p}_k),
    \label{eq:roc_auc}
\end{equation}

\noindent where $\mathcal{C}$ is the set of classes, $|\mathcal{C}|$ denotes the cardinality of $\mathcal{C}$, $\mathbb{I}(y=k)$ is the binary indicator function for class $k$ (treating class $k$ as positive and the rest as negative), and $\hat{p}_k$ represents the predicted probability for class $k$.

\section{Privacy Analysis}
\label{app:privacy}

In addition to the formal forward DP guarantees, we empirically evaluate privacy risks from memorization, near-duplicate behavior, and membership leakage.
Across datasets, SelPE achieves high Nearest Record Similarity (NRS) and competitive Distance to Closest Record (DCR), indicating negligible memorization and limited duplication.
More importantly, SelPE consistently yields lower TPR at FPR$=1\%$ under membership inference attacks than strong baselines such as AUG-PE, reflecting a more conservative and reliable privacy--utility trade-off under stringent false-positive constraints.

\textbf{Nearest Record Similarity (NRS).}
NRS measures exact memorization by checking whether synthetic records replicate training instances after key-wise normalization.
Let $n_{\mathrm{syn}}$ be the number of synthetic samples and $n_{\mathrm{dup}}$ the number of exact duplicates.
We define
$
\mathrm{NRS} = 1 - \frac{n_{\mathrm{dup}}}{n_{\mathrm{syn}}}.
$
Higher NRS indicates fewer exact replicas of the training data.

\textbf{Distance to Closest Record (DCR).}
DCR measures near-duplicate behavior by computing the distance from each synthetic record to its nearest training record.
Each record is represented by its full text and encoded with TF--IDF.
Given the training set $\mathcal{D}_{\mathrm{tr}}=\{x^{(j)}_{\mathrm{tr}}\}_{j=1}^{N}$ and the synthetic set $\mathcal{D}_{\mathrm{syn}}=\{x^{(i)}_{\mathrm{syn}}\}_{i=1}^{M}$, DCR is defined as
\[
\mathrm{DCR}\!\left(x^{(i)}_{\mathrm{syn}}\right)
=
\min_{x^{(j)}_{\mathrm{tr}}\in\mathcal{D}_{\mathrm{tr}}}
\left\lVert
\phi\!\left(x^{(i)}_{\mathrm{syn}}\right)
-
\phi\!\left(x^{(j)}_{\mathrm{tr}}\right)
\right\rVert_{1},
\]
where $\phi(\cdot)$ denotes TF--IDF vectorization.
Larger DCR suggests lower near-duplicate risk.

\textbf{Membership Inference Attack TPR at FPR$=1\%$.}
We evaluate membership leakage using a distance-based membership inference attack.
For each queried record, we use its distance to the closest synthetic record as the attack score and report the true positive rate at a fixed false positive rate of $1\%$.
Lower TPR indicates stronger resistance to membership inference.
Non-members are sampled from held-out records of the same data distribution, with the same size as the member set, to avoid bias from dataset size or distribution.

\begin{table}[h]
\centering
\caption{Privacy evaluation across datasets.}
\label{tab:privacy_all}
\setlength{\tabcolsep}{4.5pt}
\renewcommand{\arraystretch}{1.05}
\small
\begin{tabular}{llccc}
\Xhline{1.1pt}
\rowcolor{headergray}
\textbf{Dataset} & \textbf{Method} & \textbf{NRS$\uparrow$} & \textbf{DCR$\uparrow$} & \textbf{TPR$\downarrow$} \\
\Xhline{1.1pt}
\multirow{5}{*}{Water}
& DP-GEN & 1.00 & 8.16  & 3.00  \\
& AUG-PE & 1.00 & 9.05  & 8.00  \\
& WASP   & 0.99 & 8.71  & 14.00 \\
& CTCL   & 1.00 & 7.04  & 7.00  \\
& Ours   & 1.00 & 7.47  & 8.00  \\
\hline
\multirow{5}{*}{MIMIC}
& DP-GEN & 1.00 & 10.13 & 12.00 \\
& AUG-PE & 1.00 & 10.03 & 36.00 \\
& WASP   & 0.99 & 9.90  & 16.00 \\
& CTCL   & 1.00 & 7.92  & 0.00  \\
& Ours   & 1.00 & 9.99  & 10.00 \\
\hline
\multirow{5}{*}{Loan}
& DP-GEN & 1.00 & 16.02 & 12.86 \\
& AUG-PE & 1.00 & 15.51 & 20.00 \\
& WASP   & 0.99 & 16.02 & 41.43 \\
& CTCL   & 1.00 & 14.04 & 2.86  \\
& Ours   & 1.00 & 16.20 & 11.43 \\
\Xhline{1.1pt}
\end{tabular}
\end{table}

Table ~\ref{tab:privacy_all} reports privacy evaluation results on all three datasets. Across datasets, SelPE consistently achieves NRS values close to 1 and competitive DCR, indicating the absence of exact memorization and limited near-duplicate behavior.
More importantly, SelPE yields substantially lower membership inference risk under strict low-FPR attacks, as reflected by lower TPR at FPR$=1\%$ compared to strong baselines.
Overall, these results confirm that SelPE provides robust and consistent empirical privacy protection with a more conservative privacy--utility trade-off.

\section{End-to-End Privacy Guarantee}
\label{app:end_to_end_privacy}

In \textsc{SelPE}, the only operations that access the private dataset $P$ are the round-wise multi-batch \textbf{top-1 Exponential Mechanism} selections. At round $t$, the private data are randomly partitioned into $b$ disjoint batches, written as $P=\bigsqcup_{i=1}^b B_{t,i}$. Each batch-wise mechanism is then applied to its own subset:
$M_{t,i}(B_{t,i})=\mathrm{EM}(H_t,u_{t,i},\varepsilon_t)$, where $H_t$ is the candidate pool and $u_{t,i}(s)=-g(\mu(B_{t,i}),s)$ is the utility defined from batch $B_{t,i}$. Since these $b$ mechanisms operate on \textbf{disjoint subsets}, by \textbf{Parallel Composition} their joint output for round $t$, namely
$M_t(P)=(M_{t,1}(B_{t,1}),\ldots,M_{t,b}(B_{t,b}))$, is still
$\varepsilon_t$-DP. Across the $T$ rounds, the private data are accessed
repeatedly, so by \textbf{Sequential Composition} the full pipeline
$M_{1:T}(P)=(M_1(P),\ldots,M_T(P))$ satisfies total privacy
$\varepsilon_{\mathrm{tot}}=\sum_{t=1}^T \varepsilon_t$. Under the current
design, we use uniform allocation, i.e.,
$\varepsilon_t=\varepsilon_{\mathrm{tot}}/T$. All later steps, including cumulative winner aggregation, final inclusion/ranking, and contrastive
expansion, depend only on these privatized outputs, i.e.,
$S_{\mathrm{final}}=F(M_{1:T}(P))$ for some deterministic or randomized
post-processing map $F$. Therefore, by the \textbf{post-processing} property of DP, these steps incur \textbf{no additional privacy cost}.

\section{Additional Implementation Details}
\label{app:github_details}

Due to the page limit, additional implementation details are provided in this repository:
\textcolor{NavyBlue}{\url{https://github.com/ZhuXuanCH/SelPE}}.

\end{document}